\documentclass[11pt]{article}
\usepackage{amsmath}
\usepackage{amsthm}
\usepackage{amssymb}
\usepackage{algorithm}
\usepackage{subfig}
\usepackage{color}
\usepackage[english]{babel}
\usepackage{graphicx}
\usepackage{grffile}
\usepackage{wrapfig,epsfig}
\usepackage{epstopdf}
\usepackage{url}
\usepackage{color}
\usepackage{epstopdf}
\usepackage{algpseudocode}
\usepackage[T1]{fontenc}
\usepackage{bbm}
\usepackage{comment}
\usepackage{dsfont}
\usepackage{physics}
\usepackage{enumitem}
\usepackage{orcidlink}


\usepackage{tikz}
\usepackage{hyperref}  
\hypersetup{colorlinks=true,citecolor=blue,linkcolor=blue} 
\usetikzlibrary{arrows}
\usepackage[margin=1in]{geometry}
\graphicspath{{./figs/}}

\usepackage[nameinlink,capitalize]{cleveref}
\usepackage{textcomp}
\usepackage{multirow}

\newtheorem{theorem}{Theorem}[section]
\newtheorem{lemma}[theorem]{Lemma}
\newtheorem{definition}[theorem]{Definition}

\newtheorem{corollary}[theorem]{Corollary}
\newtheorem{conjecture}[theorem]{Conjecture}

\newtheorem{observation}[theorem]{Observation}

\newtheorem{problem}[theorem]{Problem}
\newtheorem{open}[theorem]{Open Problem}


\newcommand{\wt}{\widetilde}

\renewcommand{\varepsilon}{\epsilon}
\renewcommand{\tilde}{\wt}

\DeclareMathOperator{\poly}{poly}

\definecolor{mygreen}{RGB}{80,180,0}
\definecolor{b2}{RGB}{51,153,255}

\newcommand{\YZ}[1]{{\color{blue}[Yuxuan: #1]}}

\title{On verifiable quantum advantage with peaked circuit sampling}
\usepackage{authblk}

\author[1]{Scott Aaronson\thanks{aaronson@cs.utexas.edu}}
\author[2,3]{Yuxuan Zhang\orcidlink{0000-0001-5477-8924}\thanks{quantum.zhang@utoronto.ca}} 

\affil[1]{Department of Computer Science, The University of Texas at Austin.}

\affil[2]{Department of Physics and Centre for Quantum Information and Quantum Control, University of Toronto}
\affil[3]{Vector Institute for Artificial Intelligence, W1140-108 College Street, Schwartz Reisman Innovation Campus, Toronto, Ontario M5G 0C6, Canada}
\date{\today}

\begin{document}

\begin{titlepage}
  \maketitle

\begin{abstract}
    Over a decade after its proposal, the idea of using quantum computers to sample hard distributions has remained a key path to demonstrating quantum advantage. \ Yet a severe drawback remains: verification seems to require classical computation that is exponential in the system size, $n$. \ As an attempt to overcome this difficulty, we propose a new candidate for quantum advantage experiments with otherwise-random ``peaked circuits'', i.e., quantum circuits whose outputs have high concentrations on a computational basis state. \ Naturally, the heavy output string can be used for classical verification. 
    
    In this work, we analytically and numerically study an explicit model of peaked circuits, in which $\tau_r$ layers of uniformly random gates are augmented by $\tau_p$ layers of gates that are optimized to maximize peakedness. \ We show that getting $1/\text{poly}(n)$ peakedness from such circuits requires $\tau_{p} = \Omega((\tau_r/n)^{0.19})$ with overwhelming probability. \ However, we also give numerical evidence that nontrivial peakedness is possible in this model---decaying exponentially with the number of qubits, but more than can be explained by any approximation where the output of a random quantum circuit is treated as a Haar-random state. \ This suggests that these peaked circuits have the potential for future verifiable quantum advantage experiments.
    
    Our work raises numerous open questions about random peaked circuits, including how to generate them efficiently, and whether they can be distinguished from fully random circuits in classical polynomial time.
\end{abstract}

\maketitle

  \thispagestyle{empty}
\end{titlepage}

\section{Introduction}
Demonstrating quantum advantage~\cite{aaronson2011computational,preskill2012quantum,aaronson2013bosonsampling,childs2013universal,aaronson2016complexity,bouland2018quantum,movassagh2018efficient}, that is, having a quantum computer perform a task that even the best classical computers would find practically impossible, has been one of the most exciting challenges in quantum computation. In 2019, Google's Sycamore processor claimed quantum advantage by performing a random circuit sampling task in 200 seconds that, they then estimated, would take the world's most powerful classical computer approximately 10,000 years to complete~\cite{arute2019quantum}. Since then, a continuous intellectual tug-of-war has taken place between various quantum advantage claims by groups like USTC and Xanadu~\cite{zhong2020quantum,madsen2022quantum,zhu2022quantum} and improved classical simulation techniques to spoof the experiments' results~\cite{huang2020classical,barak2020spoofing,gao2021limitations,pan2022solving,oh2023spoofing}. These attacks generally exploit the noise and lack of error-correction on near-term devices ~\cite{noh2020efficient,gao2021limitations,dalzell2021random}. To some skeptics, the attacks raise serious doubts about whether quantum supremacy was achieved at all—or if it was, then whether it remains achieved.

Shouldn't it be easy to defeat the classical spoofing attacks by simply scaling the experiments up? This brings us to the most fundamental drawback of the current experiments. Namely, even checking their results takes exponential classical computation.  Worse, this seems to be inherent—ironically, by the very same theoretical evidence that tells us that spoofing should be exponentially hard! Thus, if we insist on directly verifying the quantum computer’s outputs, then in practice, we can’t currently scale much beyond $\sim60$ qubits or photons. It of course helps that verification, unlike spoofing, can be done at leisure—one can spend weeks on it, burning hundreds of thousands of dollars of computer time!  Nevertheless, we seem locked into a ``cat-and-mouse game'': we can never put spoofing completely beyond classical reach, lest we put verification beyond reach as well.  As long as this remains true, it will be a question mark hanging over quantum advantage itself.


But what about other possibilities for quantum advantage? Can we reach quantum advantage by, for example, running known quantum algorithms? At a high level, there are three desired properties of a near-term advantage experiment candidate:
\begin{itemize}
    \item Feasible on near-term devices
    \item Hard to simulate classically
    \item Easy to verify classically
\end{itemize}

Arguably, we know how to satisfy any two out of the three, but no current quantum protocol satisfies all three requirements. The problem, in a sentence, is that the quantum algorithms that deliver clear speedups don't seem to work on near-term devices, while the quantum algorithms that work on near-term devices don't seem to deliver clear speedups. 

Reaching quantum advantage with algorithms like Shor's algorithm~\cite{shor1994algorithms} and the recent breakthrough by Yamakawa and Zhandry~\cite{yamakawa2022verifiable}, while providing convincing speedups and efficiently verifiable, requires delicate control of quantum coherence and would most likely not be feasible for near-term devices. In a different direction, despite numerous recent efforts devoted to near-term algorithms like the quantum approximate optimization algorithm (QAOA)~\cite{farhi2014quantum,farhi2016quantum,zhou2020quantum,zhang2021qed} and variational quantum eigensolver (VQE)~\cite{peruzzo2014variational,kandala2017hardware,grimsley2019adaptive,brakerski2021cryptographic,kahanamoku2022classically}, they still haven't yielded plausible signs for an in-principle quantum speed up. 
\begin{figure}[tb]
    \centering.
    \includegraphics[width=0.5\textwidth]{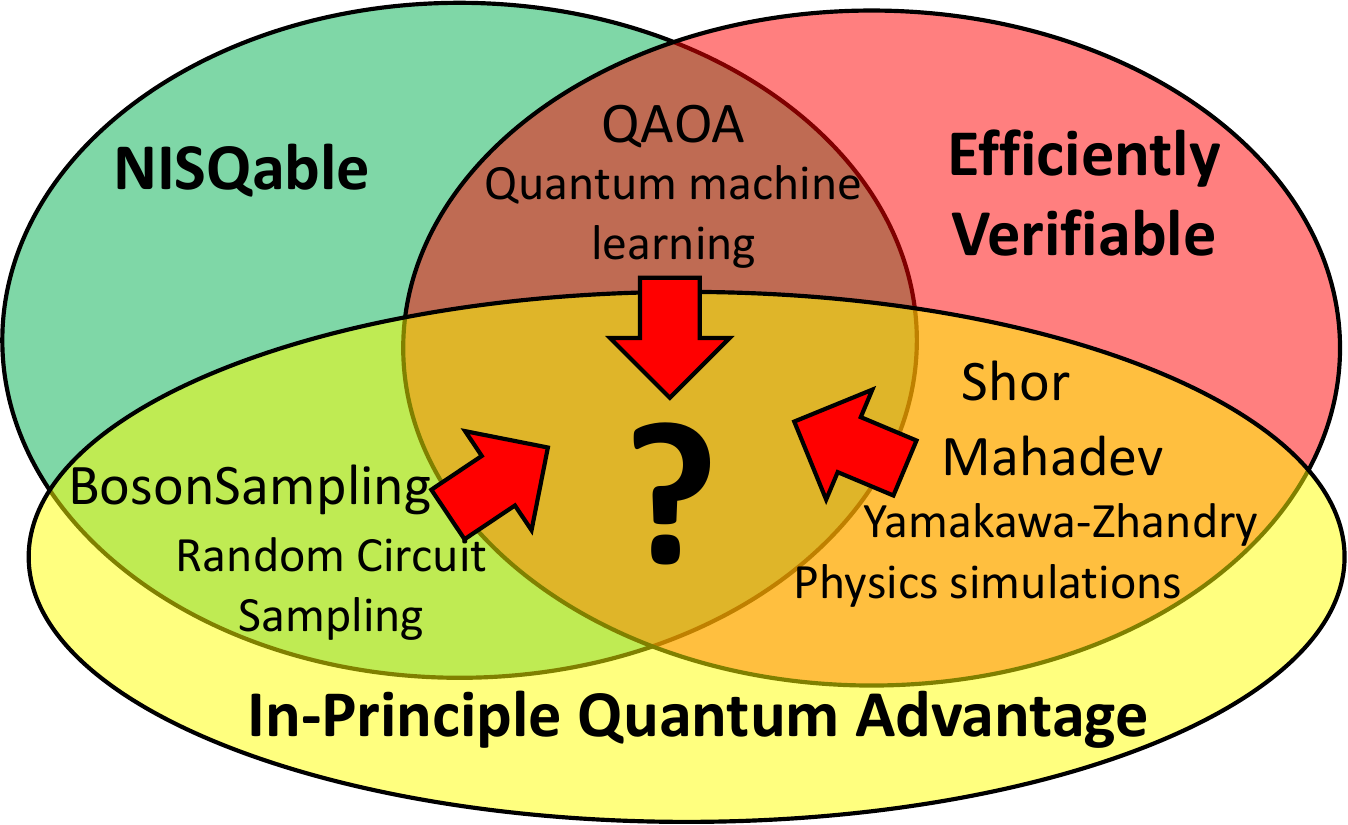}
    \caption{{\bf Field guide to verifiable quantum advantage.} Out of three possible paths to verifiable quantum advantage, we study whether it is possible to make sampling-based protocols efficiently checkable.}
    \label{fig:guide}
\end{figure}
 In a third direction, there are protocols for verifying quantumness that use, for example, post-quantum secure trapdoor claw-free functions~\cite{brakerski2021cryptographic,kahanamoku2022classically}. Still, these protocols all require delicate controls over superpositions and are likely to be unsuitable for noisy intermediate-scale quantum (NISQ)~\cite{preskill2018quantum} devices. Other NISQy proposals use pseudorandomness: A challenger creates a pseudorandom quantum circuit $C$ that conceals a secret $s$ and sends $C$ to a quantum computer; the latter has to find $s$ by running $C$. One such attempt uses exotic classical error correction codes ~\cite{shepherd2009temporally,bremner2017achieving}, but the classical security of such protocols is still under debate~\cite{kahanamoku2019forging,codsi2022classically,bremner2023iqp,bremner2023iqp,bartusek2023obfuscation,bluvstein2024logical,maslov2024fast,rajakumar2024polynomial}.
 
 In this work, we start from random circuit sampling and look for a way to make it verifiable using so-called ``peaked circuits.'' Roughly speaking, starting with the all-zero state, a peaked circuit lets the state wander around the Hilbert space, \textit{seemingly} at random, but then ending at a state that has a significant overlap with a computational basis state. 
 
 \begin{definition}[Peaked Circuit]\label{def:pc}

Given $\delta\in(0,1]$, we call the unitary $C$ $\delta$-peaked if:
$$ \max_{s\in \{0,1\}^{n}} \left|\braket{s|C}{0^n}\right|^2\geq\delta$$
with a corresponding peak weight $\delta_s \equiv \left|\braket{s|C}{0^n}\right|^2$.
\end{definition}

In other words, a circuit $C$ is $\delta$-peaked if, when applied to the all-0 initial state and then measured in the computational basis, it yields some particular output string with probability at least $\delta$.

Note that, without loss of generality, one can often focus on some particular $\delta_s$, such as $\delta_{0^n}$.  This is because, given a circuit with a peak at any $s\in \{0,1\}^n$, one can move the peak to $0^n$ by (for example) adding a small number of NOT gates at the last layer.

Peaked circuits can be used for efficient verification of quantum advantage, as follows: an challenger provides an alleged quantum computer with a circuit $C$, which is promised \textit{either} to have been chosen uniformly at random, or to have been chosen from a distribution of peaked circuits. Using a quantum computer, one can simply run $C$ a few times to see whether it is peaked (and the challenger knows which). Of course, the question remains of whether distinguishing peaked from uniformly random quantum circuits (RQCs) is hard for a classical computer.

One extremely interesting method to obtain peaked quantum circuits is to start with truly random circuits, then \textit{postselect on the event that the circuits happen to be $\delta$-peaked}, for some fixed $\delta$ (say, $1/10$).  One could then ask the question:

\begin{problem}
Can the resulting peaked circuits be distinguished, in classical polynomial time, from truly random quantum circuits?
\label{theprob}
\end{problem}

(Of course the circuits can be distinguished in \textit{quantum} polynomial time: just run them $O(1)$ times, and check whether a peak emerges in the output distribution!)

There are conflicting intuitions about the answer to Problem \ref{theprob}: on the one hand, maybe the distribution over peaked circuits is dominated by those that are peaked for ``trivial'' reasons, such as containing a large number of gates that are promptly cancelled by their inverses, giving an implementation of the identity.  On the other hand, maybe the distribution is dominated by circuits that ``take an otherwise random tour through Hilbert space,'' which \textit{so happens} (because of the postselection) to end up near a computational basis state.

Note that we could pose the same question in Nielsen's geometric picture of quantum circuits~\cite{nielsen2005geometric}.  There, the question becomes: suppose we pick a random polynomial-length path $p$ in the complexity geometry, which starts at the computational basis state $\ket{0}$, and which is constrained to end near a random computational basis state $\ket{s}$.  If we feed a description of $p$ to a classical computer, can the classical computer easily distinguish $p$ from a random unconstrained path, for example by ``contracting'' it to an essentially trivial path from $\ket{0}$ to $\ket{s}$?  Or, relatedly, can the classical computer efficiently learn the endpoint $\ket{s}$? 
 
\subsection{Our Results}

In this paper, we focus on a different but related method for generating random peaked circuits.
Consider circuits that have $\tau_{r}$ layers of random gates, followed by $\tau_{p}$ `peaking' layers of gates that can be chosen to optimize the peakedness of the overall circuit (see Fig.~\ref{fig:cir}). Note that, when $\tau_{p} \ge \tau_{r}$, we are guaranteed a 1-peaked circuit, since we can always just choose the peaking gates to invert the random gates. Thus, we are interested in what happens when $\tau_{p} < \tau_{r}$. For example,
\begin{problem}\label{prob:pc}
    Suppose $\tau_p = \tau_r / 2$, or $\tau_p = \sqrt{\tau_r}$: can we obtain a nontrivially peaked circuit \textit{then}, with high probability over the $\tau_r$ random layers?
\end{problem}

A central reason to modify the question in this way is that the original Problem~\ref{theprob} turns out to be prohibitive to study numerically, as truly random quantum circuits are very unlikely to be peaked because their output distributions are known to be not too concentrated on a small
number of output strings~\cite{hangleiter2018anticoncentration,haferkamp2020closing,brandao2021models,dalzell2022random}. A more quantitative statement can be found in Corollary~\ref{cor:rare},

Note that our new question could be rephrased as: does the state $\ket{\psi_C} = C\ket{0}$ output by a polynomial-size random quantum circuit $C$ have any ``exploitable structure'' at all to distinguish it from a Haar-random state---other than the obvious, namely that $C^{-1}\ket{\psi_C} = \ket{0}$?

Perhaps surprisingly, our numerical results provide strong evidence that the answer to this question is `yes.' In other words: whether or not it can be used for verifiable quantum advantage experiments, there \textit{is} structure in the states output by random quantum circuits, which lets us achieve nontrivial peakedness with a surprisingly small number of additional gates.  We leave the explanation of this structure as our central open problem.

What analytic results we did manage to prove are in Section~\ref{DESIGN}.  First, for $1d$ and all-to-all geometries, we show that the chance of finding $\Omega(2^{-0.49})$-peaked circuits in a logarithmic-depth RQCs is already exponentially small in $n$. Moreover, by using the known fact that polynomial-depth random quantum circuits give $t$-designs, we show that obtaining $1/\poly(n)$ peakedness, starting from a random circuit of depth $\tau$, requires $\Omega((\tau/n)^{0.19}$) gates for a $1d$ brick wall architecture.

Next, in Section \ref{PEAKABIL}, we present our numerical results. Here, we fix $\tau_p := c \tau_r$, for some $c \in (0,1)$, then we fix $\tau_r$ layers of random gates, and finally, we use gradient descent to choose the $\tau_p$ peaking layers to optimize the overall peakedness. Our central finding is that nontrivial peakedness is achievable even for constants $c \ll 1$: for example, with $n=12$ qubits and $\tau_r=40$ random layers of gates, by adding merely $\tau_p=10$ peaking layers, we can obtain, on average, a peakedness of $\delta \approx 0.2$.

Examining the actual peaked circuits, two trends emerge. First, we find that the output distributions are very similar to the distributions that arise from truly random circuits, \emph{except} that the probability of the single peaked string has been massively enhanced. Second, we find that the obtained peakedness 
$\delta$ is tightly concentrated about its mean; it fluctuates little from one circuit $C$ to another. These observations suggest, though of course they don't prove, that these peaked circuits might have ``universal,'' ``pseudorandom'' properties that make them challenging to distinguish from truly random circuits.

Admittedly, for fixed $c$ and $\tau_r/n$, the peakedness $\delta$ that we are able to achieve seems to decay exponentially with $n$.  Having said that, extrapolating our numerically-found relationships to larger $n$, we estimate that when (say) we have $n=50$ qubits, $\tau_r = 50$ random layers, and $\tau_p = 25$ peaking layers, a peakedness of $\delta \approx 0.0005$ should still be achievable, which would be feasible to detect in an experiment.\footnote{To compare, Google's first quantum advantage demonstration on 53 qubits took $5\times 10^6$ total samples over each random circuit instance~\cite{arute2019quantum}; in practice, this is more than enough to catch the peakedness even if merely $1\%$ overall state fidelity can be achieved.}

But there is an additional issue: the computation time needed to optimize the $\tau_p$ peaking layers.  On the one hand, it is entirely possible that our gradient descent algorithm got stuck in local optima, and that a better optimization method would reveal that much higher peakedness values $\delta$ are achievable; on the other hand, one might say, peaked pseudorandom quantum circuits are of limited use for quantum advantage experiments, if there is no efficient way to \textit{find} those circuits!  This leads to a second fundamental open question left by this paper:

\begin{open}
For given $\delta$, is there an efficient algorithm to generate quantum circuits that have peakedness $\delta$ and that are otherwise ``as random as possible'' (for example, that contain a large initial segment of uniformly random gates, as in this paper)?
\end{open}

\begin{figure}[tb]
    \centering.
    \includegraphics[width=0.8\textwidth]{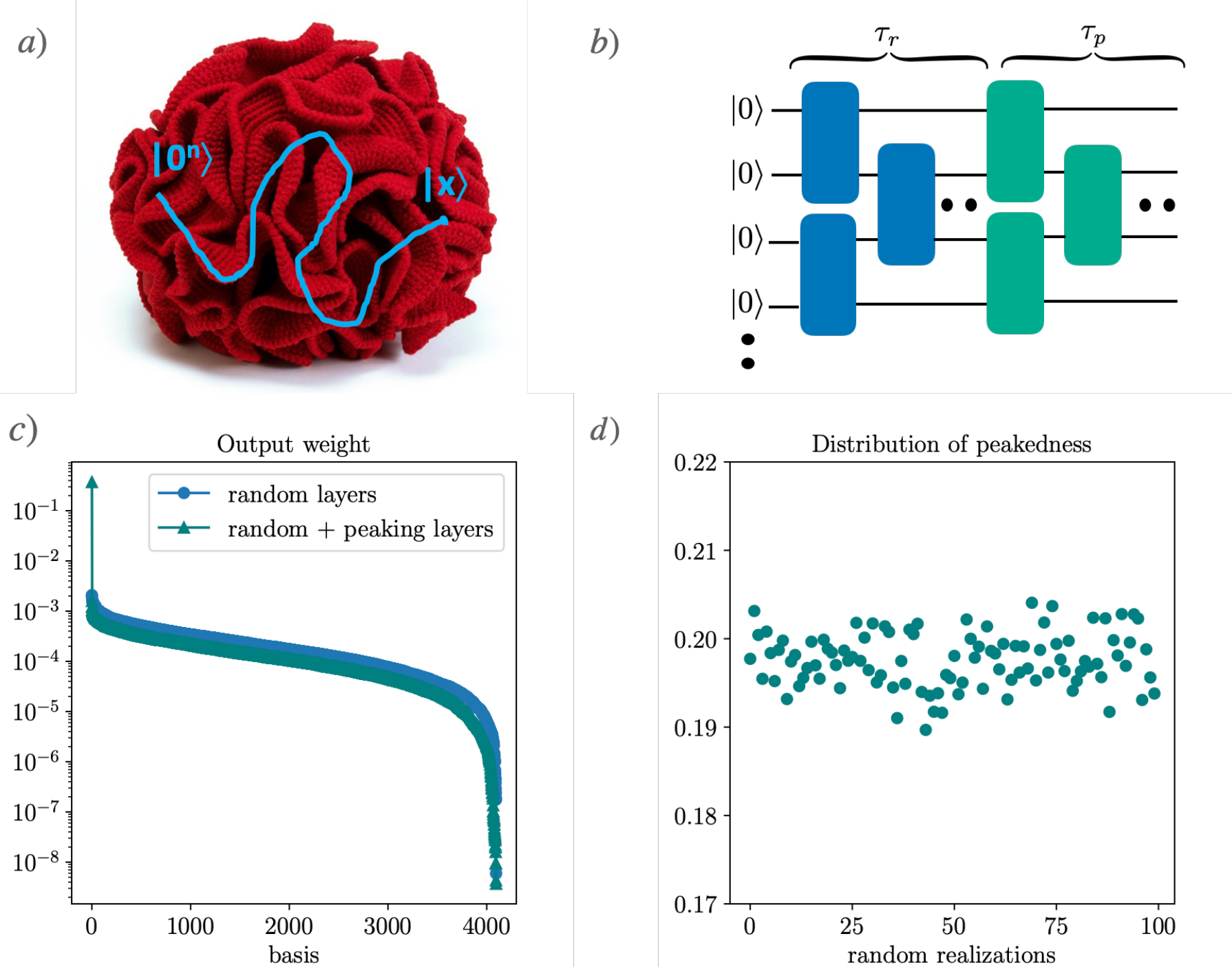}

    \caption{{\bf Generating peaked circuits from random circuits.} {\bf a)} A peaked circuit is a non-trivial short path connecting two computational bases. {\bf b)} An $1d$ circuit structure is shown here. Each block stands for a two-qubit gate: The blue blocks are drawn from the Haar-random distribution and the teal blocks are the `peaking' quantum gates. In a numerical solver, we optimize the parameters of the peaking gates, aiming at maximizing the peak weight. {\bf c)} Example of the (sorted) output weight distribution of a peaked circuit generated with our construction. At system size $n=12$, the blue curve is the output weight distribution after $\tau_r = 40$ random brickwall layers, and teal represents the distribution after merely $\tau_p = 10$ additional peaking layers. {\bf d)} we repeat the process in {\bf c} for 100 random circuits and examine the distribution over $\delta$, which turns out to have a small variance.}
    \label{fig:cir}
\end{figure}

\section{Analytical results\label{DESIGN}}

\subsection{Peaked circuits are rare in RQCs}
Given a quantum circuit $C$, let $p_C$ be the associated probability distribution over the $2^n$ possible output strings. Then $C$'s \emph{collision probability} can be defined as $\pi_C := \sum_s p_C[s]^2$. Next, given an ensemble of random circuits $C$, we will call it ``well-spread'' if the averaged collision probability is a constant times that of the maximally mixed state (which has collision probability exactly $2^{-n}$):
\begin{definition}\label{def:ac}
A circuit ensemble is well-spread if the expected collision probability satisfies
    \begin{align}\label{eq:ac}
    \mathop{\mathbb{E}}_C\big[\pi_C] \le \frac{\gamma}{2^{n}}
    \end{align}
\noindent for some constant $\gamma$.
\end{definition}

Intuitively, with overwhelming probability over the choice of circuit $C$, the output distribution cannot be too peaked. This implies the following bound:

\begin{theorem}[Probability of finding peaked circuits in an well-spread circuit ensemble]\label{theo:prob}
Let $P_\delta$ be the probability of finding a $\delta$-peaked circuit in a well-spread ensemble.  Then $P_\delta = O(\frac{1}{\delta^{2} 2^{n}})$.
\end{theorem}
\begin{proof}
Given any $\delta$-peaked circuit $C$, by Definition~\ref{def:pc}, its collision probability satisfies

$$\pi_C = \sum_s p_C[s]^2 \geq \max_{s} p_C[s]^2\geq \delta^2.$$

So taking the expectation over all $C$, we have

$$\mathop{\mathbb{E}}_C\big[\pi_C]\geq \delta^2 P_\delta.$$

Combined with Definition~\ref{def:ac} we have
$$\delta^2 P_\delta  \leq \mathop{\mathbb{E}}_C\big[\pi_C] \le \frac{\gamma}{2^{n}}.$$
\end{proof}

This shows that, in a well-spread ensemble, the probability of finding an $\delta$-peaked circuit where $\delta = \Omega(2^{-0.49n})$ is exponentially small in $n$. Furthermore, random quantum circuits for $1d$ and fully connected architectures are known to well-spread even in logarithmic depth~\cite{dalzell2022random}, which gives the following corollary:
\

\begin{corollary}[Peaked circuits are rare in logarithmic depth]\label{cor:rare}
    Assuming a $1d$, fully-connected random circuit architecture with circuit depth $\tau = \Omega(\log n)$, the probability that a random circuit $C$ is $\delta$-peaked, where $\delta = \Omega(2^{-0.49n})$, is $\exp(-\Omega(n))$.
\end{corollary}

Numerically, for example, we found that in the $1d$ brick wall architecture, with $\tau_{r} = n =10$, the maximum peakedness in 10,000 random trials was already below 0.04 (see Fig.~\ref{fig:cir}).

\subsection{How many layers are necessary to `peak' a RQC?}
For the rest of the paper, we focus on the circuit model discussed in Problem~\ref{prob:pc}. Specifically, we are interested in the relationship among $\tau_p$, $\tau_r$, and $\delta$: for example, for given $\delta$, how many peaking layers must we add to a given number of random layers to obtain a $\delta$-peaked circuit?

In this section, we prove that when $\delta = 1/\poly(n)$, the number $\tau_p$ of peaking layers needs to grow at least polynomially with $\tau_r$.

\begin{theorem}[Circuit lower bound for $1/\poly(n)$-peaked circuits]\label{theo:lb}
In a $n$-qubit system, obtaining a $\delta$-peaked circuit, where $\delta$ is at least $1/\poly(n)$, from a random quantum circuit $C$ with depth $\tau \gg n$, with high probability over $C$, requires at least $\Omega((\tau_r/n)^{0.19})$ layers.
\end{theorem}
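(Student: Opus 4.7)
The plan is to bound the success probability of the optimization adversary by a two-level union bound: first over an $\epsilon$-net of $\tau_p$-depth peaking circuits $V$, and, for each fixed $V$, over the $N = 2^n$ rotated basis vectors $\{V^\dagger\ket{s}\}_s$. The key observation is that the overall peakedness equals $\max_s |\langle V^\dagger s|\psi\rangle|^2$, where $\ket{\psi} = U\ket{0^n}$ is the output of the random part of the circuit. Thus showing that no peaking circuit can achieve peakedness $\delta$ reduces to showing that, with high probability over $U$, no vector in any orthonormal basis reachable by a $\tau_p$-depth circuit has overlap $\geq \sqrt\delta$ with $\ket{\psi}$. Since $\ket{\psi}$ is produced by a sufficiently deep random circuit, I can exploit its approximate design property to make any single large-overlap event exponentially unlikely.

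Concretely, I proceed in three steps. First, I invoke the best known depth-to-design result yielding the $0.19$ exponent: $\tau_r$ random brickwork layers on $n$ qubits form an approximate $t$-design for $t = \Theta((\tau_r/n)^{0.19})$. For Haar-random $\ket{\psi}$ and any fixed unit $\ket{\phi}$, the moment formula gives $\E|\langle\phi|\psi\rangle|^{2t} \leq (t/N)^t$ up to absolute constants, so Markov's inequality applied to the $t$-th power yields $\Pr[|\langle\phi|\psi\rangle|^2 \geq \delta] \leq O\bigl((t/(\delta N))^t\bigr)$; the $t$-design property transfers this bound to the actual ensemble up to a constant factor. Second, I union-bound over the $N$ basis vectors $V^\dagger\ket{s}$, picking up a factor of $N$. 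Third, I cover the space of $\tau_p$-layer peaking circuits with an $\epsilon$-net of size $\exp(O(\tau_p n \log(1/\epsilon)))$, since each layer contains $O(n)$ two-qubit gates with constantly many parameters; because peakedness is $O(\epsilon)$-Lipschitz in $V$ in operator norm, taking $\epsilon = \Theta(\delta)$ preserves the threshold up to a constant. Multiplying the three contributions, the total probability is $\ll 1$ unless $\tau_p n \gtrsim t \log(\delta N / t)$, which for constant $\delta$ and $t \ll N$ forces $\tau_p = \Omega(t) = \Omega((\tau_r/n)^{0.19})$.

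The main obstacle will be matching the form of the available $t$-design statement to the moment bound required. What I need is control on $\E |\langle\phi|U|0^n\rangle|^{2t}$ for a \emph{single} fixed $\ket{\phi}$ up to a constant factor of the Haar value; this follows cleanly from relative-error $t$-design bounds, but with additive-in-diamond-norm designs one must check that the additive slack does not wash out the tail at scale $\delta$, and one may need to pay a polynomial factor in $N$ in the design accuracy to compensate. A secondary concern is confirming that the net overhead $\log(1/\epsilon) = O(\log(1/\delta))$ contributes only a multiplicative constant at the level of $\tau_p n$, which is harmless for constant $\delta$. Finally, the explicit exponent $0.19$ is inherited verbatim from the depth-to-design bound used, so any subsequent improvement in the depth-to-design tradeoff would tighten the lower bound automatically.
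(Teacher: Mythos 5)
Your proposal is correct in spirit and reaches the same conclusion by essentially the same mechanism: invoke the depth-to-design result of Haferkamp ($\tau_r = \Theta(n t^{5+o(1)})$, whence $t = \Theta((\tau_r/n)^{1/(5+o(1))}) \approx (\tau_r/n)^{0.19}$), then show that a state produced by a $t$-design cannot, with non-negligible probability, have overlap $\geq\delta$ with anything that a circuit of $\ll t$ ``size'' can reach. Where you differ from the paper is that the paper outsources the second half to Brand\~ao--Chemissany--Hunter-Jones--Kliesch--Preskill's \emph{strong state complexity} theorem (quoted in the paper as Theorem~\ref{theo:sc}), plus the observation (Lemma~\ref{lem:ssc_to_sc}) that strong state complexity lower-bounds preparation complexity, and only then translates gate count into layer count. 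You instead reprove the operative part of that theorem inline: an $\epsilon$-net over $\tau_p$-layer peaking circuits plus a union bound over output strings plus a $t$-th-moment Markov tail. This is a legitimate and arguably more transparent exposition, since it avoids the detour through the ``distinguishing the maximally mixed state'' formulation; it is also essentially how the Brand\~ao et al.\ bound is proved internally (their $(n+1)^r|\mathcal{G}|^r$ factor is exactly your net/count, and their $(16t^2/d)^t$ factor is your moment tail). Both routes buy the same $\Omega(t)$ bound up to polylog factors.

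Two remarks on the details you flag. (1) Your concern about additive diamond-norm design error swamping the exponentially small Haar moment $(t/N)^t$ is a real subtlety, but it is resolvable: the $t$-design statements used here (and by Brand\~ao et al., as the $1/(1+\epsilon)$ factor in their bound indicates) can be taken in \emph{relative}-error form, so the $2t$-th amplitude moment under the actual ensemble is at most $(1+\epsilon)$ times the Haar value, and no additive slack appears. You identified the issue but did not close it; the paper sidesteps it by citing a theorem stated for the right notion of approximation. (2) In your union bound over the $N$ vectors $V^\dagger\ket{s}$, note that the paper instead absorbs the choice of $s$ into the peaking circuit (a final layer of NOT gates), so it never pays a union bound over $s$; this is cosmetic since, as you show, the extra factor of $N$ is harmless once $t\geq 2$.
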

Our lower bound can be proved with the two following results:
\begin{enumerate}
\item Polynomial-depth random circuits form approximate unitary $t$-designs~\cite{ambainis2007quantum,gross2007evenly,dankert2009exact,brandao2016local,Haferkamp2022randomquantum,mittal2023local,li2023designs}.
\item Circuits sampled from an approximate unitary $t$-design have high ``strong complexity'' (a notion to be defined later) and thus cannot be approximated with a shallow circuit~\cite{brandao2021models}.
\end{enumerate}
 \subsubsection{Unitary $t$-design} 
 A probability distribution $\mathcal{E}$ over unitary operations forms a unitary $t$-design if there is \emph{no} superoperator acting on a $t$-fold Hilbert space to distinguish it from the Haar distribution up to some given precision: 
\begin{definition}[Approximate unitary $t$-design]
    For some constant $\epsilon$, a probability distribution $\mathcal{E}$ on $U(d)$, where $d = 2^n$, forms an $\epsilon$-approximate unitary $t$-design if it obeys
    \begin{align}
        ||\Phi^{(t)}_{\mathcal{E}} -\Phi^{(t)}_{\rm Haar}||_{\Diamond}\leq\epsilon
    \end{align}
    where $\Phi^{(t)}_\mathcal{E}(A)$ is the $t$-fold moment superoperator of a operator $A$ with respect to the probability distribution $\mathcal{E}$, on $\mathcal{H}^{\otimes t}$, defined as:
    \begin{align}
        \Phi^{(t)}_\mathcal{E}(A) = \int_{\mathcal{E}} U^{\otimes t}A U^{\dag\otimes t} d\mathcal{E}(U)
    \end{align}
    
\end{definition}
Currently, the best-known result on approximate $t$-designs from random quantum circuits is provided by~\cite{Haferkamp2022randomquantum}:
\begin{lemma}[Random circuits form approximate designs]\label{lem:rc}
Random quantum circuits generate $\epsilon$-approximate unitary
$t$-designs in depth $\tau > nt^{5+o(1)}$.
\end{lemma}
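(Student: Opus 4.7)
The plan is to control the diamond distance between the depth-$\tau$ random-circuit moment superoperator and the Haar moment superoperator via a spectral-gap argument, with the main effort concentrated in a local spectral gap estimate that yields a $t^{-(4+o(1))}$ scaling.

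Vectorizing $\Phi^{(t)}$ yields, for each distribution $\mathcal{E}$, a moment operator $M_{\mathcal{E},t} = \E_{U\sim\mathcal{E}}[U^{\otimes t}\otimes\bar{U}^{\otimes t}]$ on $(\C^{2^n})^{\otimes 2t}$. By unitary invariance $M_{\text{Haar},t}$ is the orthogonal projector onto the commutant of the diagonal $U(2^n)$ action, and by submultiplicativity,
$$\| M_{\mathcal{E},t}^{\tau} - M_{\text{Haar},t}\|_\infty \le (1 - g(n,t))^{\tau},$$
where $g(n,t)$ is the spectral gap of the single-layer operator $M_{\mathcal{E},t}$ on the orthogonal complement of its fixed space. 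A standard conversion inflates this by a factor $2^{2nt}$ when passing to diamond norm, so $\tau \gtrsim g(n,t)^{-1}(nt + \log(1/\epsilon))$ layers suffice for an $\epsilon$-approximate $t$-design.

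Lower-bounding $g(n,t)$ reduces, following the Brand\~ao--Harrow--Horodecki strategy, to a frustration-free local Hamiltonian gap. The brickwork moment operator factors as a product $\prod_i h_{i,i+1}$ of local two-qubit moment operators, each an orthogonal projector. After symmetrizing even and odd sublayers, $g(n,t)$ is controlled by the spectral gap of the frustration-free Hamiltonian $H_{n,t} = \sum_i(I - h_{i,i+1})$, and by Nachtergaele's martingale method (or the detectability lemma) this gap is at least an $\Omega(1/n)$ fraction of the local gap $\Delta(t) := \lambda_2(h_{i,i+1})$. Thus it suffices to show $\Delta(t) = \Omega(t^{-(4+o(1))})$, which then plugs into the preceding bound to give depth $\tau = O(n t^{5 + o(1)})$.

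The hard part, and the principal new input of Haferkamp et al., is the local gap estimate. The $U(4)$-commutant acting on $t$ copies of two qubits is spanned by permutation operators $\{T_\sigma : \sigma \in S_t\}$ via Schur--Weyl duality, but in the small local dimension $q = 4$ these are close to linearly dependent when $t \gg q$, and their Gram matrix (whose inverse is the Weingarten matrix) acquires very small eigenvalues for permutations $\sigma$ with many short cycles. One must diagonalize $h_{i,i+1}$ in a suitably orthonormalized permutation basis and track how its spectral gap degrades with $t$; the combinatorial core is an eigenvalue estimate for a symmetric-group convolution operator with Weingarten kernel, whose subleading eigenvalues can be grouped by cycle type. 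The delicate cancellations in this estimate are what pull the exponent down from the naive $t^{-5}$ to $t^{-(4+o(1))}$. I expect this Weingarten eigenvalue analysis to be the main obstacle, since a more straightforward bound would only yield $\tau = O(n t^{6})$ rather than the stated $nt^{5+o(1)}$.
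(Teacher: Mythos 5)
The paper does not prove this lemma at all --- it is quoted verbatim, with a citation, from Haferkamp's work~\cite{Haferkamp2022randomquantum}, so there is no in-paper proof to compare against. Your sketch of the Haferkamp/BHH machinery (vectorize the $t$-fold moment superoperator, reduce to a single-layer spectral gap $g(n,t)$, control $g$ via the detectability-lemma/Nachtergaele reduction to a frustration-free 1D Hamiltonian, and supply an improved local-gap estimate $\Delta(t)=\Omega(t^{-(4+o(1))})$) is structurally the right framework; that reduction is due to Brand\~{a}o--Harrow--Horodecki, and the local-gap improvement is the new ingredient in Haferkamp.

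However, the bookkeeping in your sketch does not close. You assert $g(n,t)=\Omega(\Delta(t)/n)$ and you also retain the $2^{2nt}$ inflation when passing to the diamond norm; plugging both into $\tau \gtrsim g^{-1}(nt+\log(1/\epsilon))$ yields $\tau = O(n^{2}t^{5+o(1)})$, an extra factor of $n$ over the claimed bound. The point of the martingale method in BHH is precisely that it yields a gap lower bound for $H_{n,t}$ that is \emph{independent of $n$} (roughly $\Omega(\Delta(t)/\mathrm{polylog}(1/\Delta(t)))$); the single factor of $n$ in the depth comes only from the $nt$ exponent in the diamond-norm conversion, not from the gap. Separately, your remark that a naive bound would give $O(nt^{6})$ is unsupported --- the previously known BHH bound is roughly $O(nt^{10})$ --- and your gloss on the combinatorial core as a Weingarten-kernel eigenvalue problem, while not unreasonable as intuition, does not actually pin down the technical novelty of Haferkamp's local-gap argument. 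None of this affects the paper in question, which uses the lemma purely as a black box.
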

\noindent The samples drawn from a unitary $t$-design are known to have well-spread properties with overwhelmingly high probability; that is, the output distribution will have a peak weight that is exponentially small in the system size $n$. 

\subsubsection{Strong circuit complexity} While circuit complexity refers to the minimum number of elementary gates needed to perform a certain computation, the ``\emph{strong} circuit complexity,'' defined by Brand$\rm\tilde{a}$o et al.~\cite{brandao2021models}, captures the difficulty of \emph{distinguishing}
a given circuit from the completely depolarizing channel, $\mathcal{D}$:
\begin{definition}[Strong circuit complexity, ancilla-free]\label{def:sc_circuit}
The strong complexity of a quantum circuit C is the minimal circuit size required to implement a measurement, with no ancilla qubits, that distinguishes $\rho \rightarrow C\rho C^{\dag}$ from the completely depolarizing channel $\mathcal{D}: \rho \rightarrow \mathbb{I}/d$ with fixed constant bias (say, $1/2$).
\end{definition}
Similarly, the strong \emph{state} complexity is defined as the minimum number of gates required to implement some measurement $M$ to \emph{distinguish} a given quantum state $\ket{\psi}$ with the maximally mixed state $\rho_0 = \mathbb{I}/d$ to some certain resolution $\eta$. Formally, let $\beta(r,\ket{\psi})$ be the maximum bias with which $\ket{\psi}$ can be distinguished from the maximally mixed state, via a circuit with at most $r$ gates from the gate set $ \mathcal{G} \subseteq U(4)$:
\begin{align}
    \beta(r,\ket{\psi}) =& \text{ max}\ |\Tr{M(\ketbra{\psi}{\psi}-\rho_0)}|\\
    &\text {s.t. }M \text{ can be implemented with at most } r \text{ gates}
\end{align}

\noindent Then the strong state complexity is defined as:
\begin{definition}[Strong state complexity]\label{def:sc_state}
    For a given $r \in \mathbb{N}$ and $\eta \in (0, 1)$, a pure state $\ket{\psi}$ has strong $\eta$-state
complexity at most $r$ if and only if
$\beta(r,\ket{\psi}) \geq 1 - 1/d - \eta$.
We denote this $\mathcal{C}_\eta (\ket{\psi}) \leq r$.
\end{definition}

It is not hard to see that the strong state complexity is an upper bound on the conventional circuit complexity, of approximately \emph{preparing} a state:
\begin{lemma}[Strong state complexity to regular state complexity]\label{lem:ssc_to_sc}
    If a quantum state $\ket{\psi}$ has strong complexity $\mathcal{C}_\eta (\ket{\psi}) \geq r$ for some $r \in \mathbb{N}$ and $\eta \in (0, 1)$, then 
    \begin{align}
        \min_{size[V]<r}\frac{1}{2}||\ketbra{\psi}{\psi} - V\ketbra{0}{0}V^\dag||_1 > \sqrt{\eta}
    \end{align}
    or equivalently, 
    \begin{align}
        \max_{size[V]<r}\frac{1}{2}\abs{\bra{0} V\ket{\psi}}^2< 1-\eta
    \end{align}
\end{lemma}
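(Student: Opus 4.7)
The plan is to prove the lemma by contraposition: any unitary $V$ of size less than $r$ that prepares $\ket{\psi}$ to within trace distance $\sqrt{\eta}$ yields, via the induced measurement $V\ketbra{0}{0}V^\dagger$, an efficient distinguisher between $\ket{\psi}$ and the maximally mixed state $I/d$, thereby pushing the strong state complexity strictly below $r$.

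First, I would translate the trace-distance premise into a fidelity bound. For pure states the identity $\tfrac{1}{2}\|\ketbra{\phi_1}{\phi_1} - \ketbra{\phi_2}{\phi_2}\|_1 = \sqrt{1 - |\braket{\phi_1}{\phi_2}|^2}$, applied to $\ket{\psi}$ and $V\ket{0}$, makes the two formulations in the statement equivalent: $\tfrac{1}{2}\|\ketbra{\psi}{\psi} - V\ketbra{0}{0}V^\dagger\|_1 \leq \sqrt{\eta}$ if and only if $|\bra{\psi}V\ket{0}|^2 \geq 1-\eta$.

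Second, I would exhibit the distinguishing measurement $M := V\ketbra{0}{0}V^\dagger$. This is a rank-one projector onto $V\ket{0}$ that can be realized with $size[V] < r$ gates and no ancillas (apply $V^\dagger$, measure in the computational basis, accept iff the outcome is $0^n$). Using $\Tr M = 1$, the bias computes directly as
$$\bigl|\Tr[M(\ketbra{\psi}{\psi} - I/d)]\bigr| = |\bra{\psi}V\ket{0}|^2 - \tfrac{1}{d} \geq 1 - \tfrac{1}{d} - \eta.$$
By Definition~\ref{def:sc_state} this means $\mathcal{C}_\eta(\ket{\psi}) \leq size[V] < r$, which is the contrapositive we wanted.

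The only real subtlety---and what I view as the main (minor) obstacle---is matching gate counts between preparation and measurement. The definition of $\beta$ restricts $M$ to circuits over a gate set $\mathcal{G} \subseteq U(4)$, so one must verify that $V^\dagger$ lies in that set with essentially the same cost as $V$. For a standard inverse-closed universal two-qubit gate set this is immediate; otherwise a constant multiplicative overhead can be absorbed into the bound $r$ without affecting the $\Omega((\tau_r/n)^{0.19})$ scaling that Theorem~\ref{theo:lb} ultimately needs.
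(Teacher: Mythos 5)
Your proof is correct and takes the same route as the paper's one-line argument: invert the preparation circuit $V$ to obtain the rank-one measurement $M = V\ketbra{0}{0}V^\dagger$, compute the bias $\Tr\bigl[M(\ketbra{\psi}{\psi}-\mathbb{I}/d)\bigr]=|\bra{\psi}V\ket{0}|^2-1/d$, and invoke \cref{def:sc_state}. One thing you should flag explicitly rather than silently patch: as literally printed the hypothesis $\mathcal{C}_\eta(\ket{\psi})\leq r$ has the inequality reversed --- with that reading the claim is false (e.g.\ $\ket{\psi}=\ket{0^n}$ has $\mathcal{C}_\eta(\ket{\psi})\leq 0\leq r$, yet the identity circuit prepares it exactly, violating the stated conclusion). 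Your contrapositive --- a size-$<r$ preparation forces $\mathcal{C}_\eta(\ket{\psi})<r$ --- is a proof of the intended direction, whose hypothesis should read $\mathcal{C}_\eta(\ket{\psi})>r$; that is also the form that the proof of \cref{theo:lb} actually uses (``\,$\ket{\psi_r}$ cannot have strong state complexity $o(t)$, hence approximating it requires $\Omega(t)$ gates''). You also correctly drop the spurious factor of $\tfrac12$ in the lemma's second display, and your observation that $V^\dagger$ must be realizable in the measurement gate set $\mathcal{G}$ (inverse-closure, or a constant-factor overhead absorbed into $r$) is exactly the right detail needed to make the paper's terse ``invert the circuit'' rigorous.
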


The proof of Lemma~\ref{lem:ssc_to_sc} is simply that one can invert the circuit to prepare $\ket{\psi}$.

Much more interestingly,~\cite{brandao2021models} proved a lower bound on the strong complexity of states that are generated via approximate unitary $t$-designs acting on arbitrary initial states:

\begin{theorem}[Strong state complexity of states sampled from $t$-designs]\label{theo:sc}
Consider a (pure) state in $d = 2^n$ dimensions that results from applying a randomly sampled unitary associated with an $\epsilon$-approximate 2t-design to a fixed, arbitrary initial state $\ket{\psi_0}$. Then probability that $U\ket{\psi_0}$ has strong complexity at least $r$ is:
\begin{align}
    {\rm Pr}[\mathcal{C}_\eta (\ket{\psi}) \leq r]\leq 2(1+\epsilon)d n^r|G|^r\left(\frac{16t^2}{d(1-\eta)^2}\right)^t
\end{align}
In other words, so long as $$ r \lessapprox \frac{t[n+2\log(1-\eta)-2\log (t)]}{\log(n) } $$ this probability remains tiny, provided that $n \geq |G|$ and $t < d/2$. 
 \end{theorem}

Qualitatively, choosing any $\eta$ such that $1-\eta = 1/\operatorname{poly}(n)$, and taking the $n\rightarrow\inf$ limit where $\log(t)\ll n$, the probability of getting a circuit with strong complexity $o(nt)$ is exponentially small in $t$. Combining Lemma \ref{lem:rc} and Theorem \ref{theo:sc} gives rise to Theorem \ref{theo:lb}, which can be proved with the following argument: 
\begin{proof}[Proof of Theorem \ref{theo:lb}]
Let a unitary $C_r$ drawn from an approximate $t$-design act on the all-0 state without loss of generality.  Call the resulting state $\ket{\psi_r}$. Assume that with a high (say, constant) probability over $C_r$, there exists a circuit $C_p$ such that $C_p C_r$ is a $\delta$-peaked circuit for some $\delta = 1/\poly(n)$ and $\tau_p<t$. Then from Definition \ref{def:pc} it follows that $\abs{\bra{0}C_p\ket{\psi_r}}^2>\delta$. 

However, asymptotically in $t$, Theorem~\ref{theo:sc} says that $\ket{\psi_r}$ cannot have strong state complexity $o(nt)$; further, Lemma~\ref{lem:ssc_to_sc} shows that approximating such a state to a $1/\poly(n)$ precision requires $\Omega(nt)$ elementary gates, or $\Omega(t)$ in circuit depth. Lemma \ref{lem:rc} then implies that $t = \Omega((\tau_r/n)^{0.19})$. It follows that achieving a peakedness $\delta = \Omega(1/\operatorname{poly}(n))$ requires $\tau_p = \Omega((\tau_r/n)^{0.19})$. 
\end{proof}
It is further conjectured that $1d$ random circuits acting on qudits already form $t$-designs at depth $O(nt)$~\cite{hunter2019unitary}, which would imply $\tau_p = \Omega(\tau_r/n)$, in which case only an $O(n)$ multiplicative gap between $\tau_p$ and $\tau_r$ might suffice for a polynomially small peakedness $\delta$.  Indeed, we conjecture that a \emph{constant} multiplicative gap might already suffice for this.

\section{Numerical results on the peakability of RQCs\label{PEAKABIL}}
Having proved a lower bound on the number of layers $\tau_p$ needed for a peaked circuit, we now ask what peakedness \emph{can} be obtained with $\tau_p\ll \tau_r$, with high probability over the $\tau_r$ random layers.

We focus, without loss of generality, on $\delta_{0^n}$, the output weight of the all-$0$ string.  We denote by  $\overline{\delta_{0^n}}$ the average of this weight over all choices for the random layers.

Our goal in this section is to explore whether there is \emph{any} nontrivial peakedness to be obtained in the regime $\tau_p \ll \tau_r$.  For simplicity, we present our results for the case of a $1D$ brick wall circuit geometry, although we found quantitatively similar results in other architectures, such as all-to-all connected circuits.

In Appendix \ref{app:single}, we prove that, in the special case that the RQC consists of \emph{two} layers, there exists a single peaking layer we can add that increases the average peak weight from $(25/48)^{n/2}$ to $(7/8)^{n/2}$.

But what about the more interesting case where the RQC has a large depth---say, linear or polynomial in $n$?  Here, lacking an analytic result, we try to gain insight using a numerical optimizer based on machine learning and tensor-network packages \textsc{PyTorch}~\cite{paszke2019pytorch} and \textsc{quimb}~\cite{gray2018quimb}.

We stick with the circuit structure in Fig.~\ref{fig:cir}: the first $\tau_r$ layers consist of fixed gates that are sampled from a two-qubit Haar distribution.  They are followed by $\tau_p$ layers of parameterized quantum circuits (PQC) that depend on some variational parameter $\pmb{\theta}$. Call the resulting circuit $C({\pmb{\theta}})$.  We then run stochastic gradient-descent optimization~\cite{kingma2014adam} with the following target function:
\begin{align}
    \max_{{\pmb\theta}}\delta_{0^n}(C(\pmb\theta)) = \max_{{\pmb\theta}} \abs{\braket{0^n|C(\pmb\theta)}{0^n}}^2. 
\end{align}
\begin{figure}[tb]
    \centering.
    \includegraphics[width=0.9\textwidth]{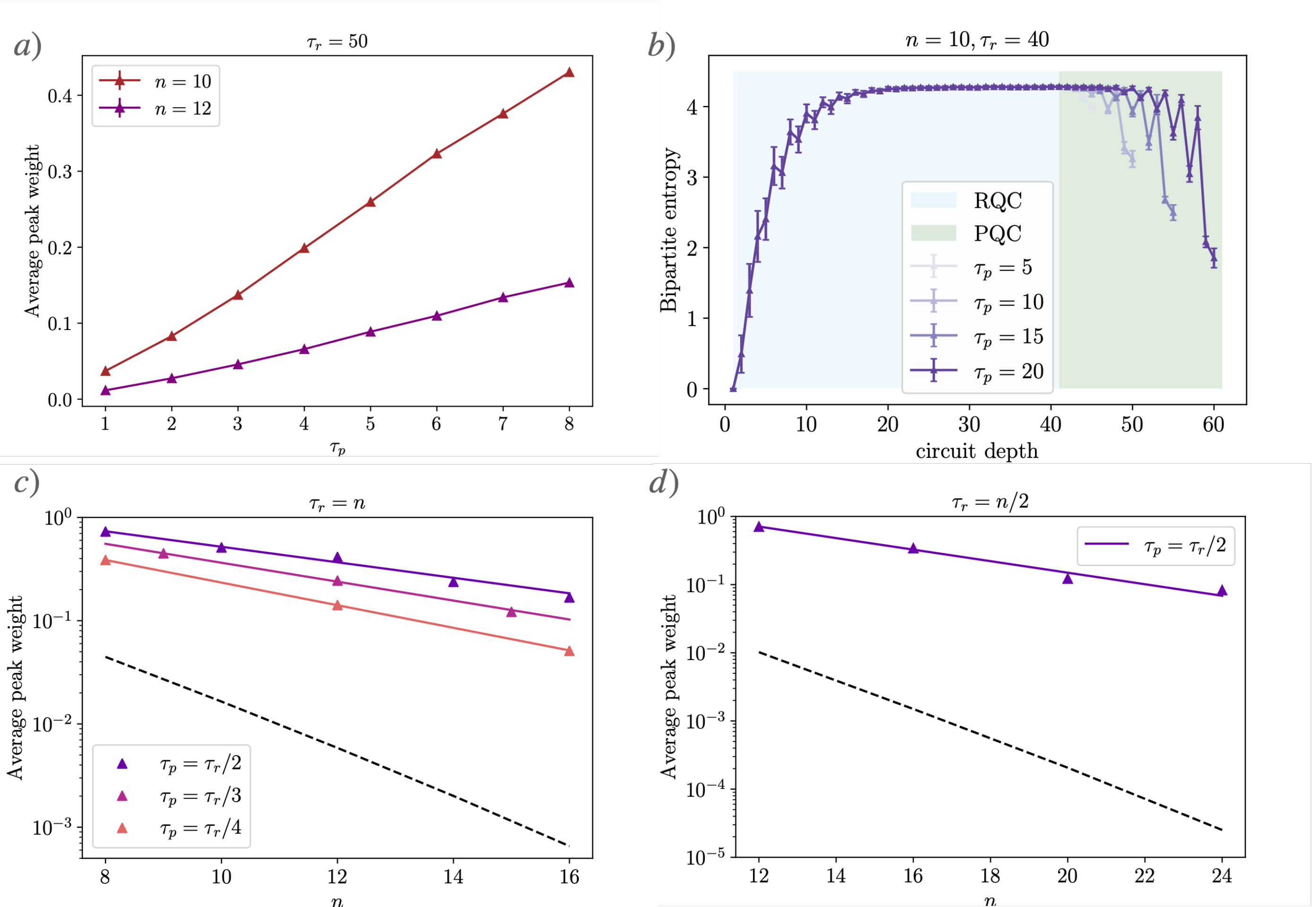}
    \caption{{\bf Numerical results on peakability.} {\bf a)} we fix the number of random layers at 50 and vary the number of peaking layers, with $n=10$ and $n=12$ qubits. Error bars for the average peak weights are too small to be visible. Each point is averaged over 100 random instances. {\bf b)} The entanglement entropy of the state at different depths, averaged over 100 instances, if we split the qubits into equal left and right halves.  {\bf c)} For each line, we fix the number of peaking layers $\tau_p$ to be a constant fraction of the number of random layers $\tau_r$, which we in turn set equal to the system size $n$. The triangles are data points and the straight lines are fitted curves. The dashed curve shows a `base' value for average peak weight at $\tau_r = n/2$. This shows that the optimized peaking layers are \emph{not} the trivial inverse of the last $\tau_p$ random layers, which could be hard for a classical distinguisher to extract the output string $s$. {\bf d)} Same as {\bf c} but now the random depth is fixed to be {$n/2$} to allow probing larger system sizes. }
    \label{fig:res}
\end{figure}
That is, we maximize the concentration of $C({\pmb{\theta}})$ on the all-zero output string. Notice that we don't need to optimize over all output strings $s$, as any two computational bases are trivially connected with at most one additional layer of NOT gates in the end, which can also be absorbed into the last layer of the PQC. 

As shown in the left panel of Figure \ref{fig:res}, we first fix $\tau_r=50$ and examine over $\tau_p$. To our surprise, attaching only $\tau_p=8$ layers led to an average peak weight of more than $0.15$ at system size $n = 10, 12$.

On the other hand, as one might expect, the reachable average peak weight drops with system size $n$ at a given ratio $\tau_p / \tau_r$. To systematically study the scalability of our peaked circuit construction, we try setting $\tau_r = n$ (bottom left) and $\tau_r = n/2$ (bottom right), and then take $\tau_p$ to be various constant fractions of $\tau_r$, namely $\tau_r/2$, $\tau_r/3$, and $\tau_r/4$.

On a log scale, it seems clear that the average peak weights are decreasing exponentially with system size $n$. Therefore we fit them onto a curve, namely
$$\text{average peak weight} = c\cdot a^{-n}.$$
When $\tau_r = n$ and $\tau_p = n/2$, for example, we get averaged peakedness that falls off as $1.189^{-n}$.  While this falls off exponentially with $n$, it falls off much less rapidly than $2^{-n}$, or even the $2^{-.49n}$ from Theorem \ref{theo:prob}. This shows that some nontrivial structure in the states output by RQCs is already being exploited.

This numerical result suggests an exponential decay in peakedness even in the regime of constant $\tau_p / \tau_r$:
\begin{conjecture}[Upper bound on average peak weight] At $\tau_r = \operatorname*{poly}(n)$ and $ \tau_p = k\tau_r$, where $0<k<1$, we have $$\text{average peak weight} = O(\operatorname*{exp}(-n)).$$
\end{conjecture}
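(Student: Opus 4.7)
My plan is to combine the $t$-design property of $C_r$ from Lemma~\ref{lem:rc} with an $\epsilon$-net argument over the space of depth-$\tau_p$ circuits. The intuition is that, although $C_p$ ranges over a continuous manifold with exponentially many effective choices, for each fixed $C_p$ the random variable $X_{C_p} = \abs{\bra{0^n}C_p C_r\ket{0^n}}^2$ should concentrate near $1/d$ where $d=2^n$, so a union bound should control the maximum.

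First, for any fixed unitary $C_p$, $X_{C_p}$ depends on $C_r$ only through the state $C_p C_r\ket{0^n}$; if $C_r$ were exactly Haar-random this state would be Haar-random and $X_{C_p}$ would follow the $\mathrm{Beta}(1,d-1)$ distribution. Treating $C_r$ as an $\epsilon$-approximate $2t$-design via Lemma~\ref{lem:rc} yields the moment bound
\[
\E_{C_r}\!\left[X_{C_p}^{t}\right] \leq (1+\epsilon)\,\frac{t!\,(d-1)!}{(d+t-1)!} \leq (1+\epsilon)\!\left(\frac{t}{d}\right)^{\!t},
\]
independent of $C_p$, and Markov's inequality then gives $\Pr_{C_r}[X_{C_p}\geq\delta]\leq (1+\epsilon)(t/(d\delta))^t$ for each fixed $C_p$.

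Second, I would cover the space of depth-$\tau_p$ brickwork circuits with an $\epsilon_{\mathrm{net}}$-net in operator norm. Since each of the $O(n\tau_p)$ two-qubit gates is parameterized by $O(1)$ real numbers, a standard volumetric argument produces a net of log-size $\log N_{\mathrm{net}} = O(n\tau_p\log(n\tau_p/\epsilon_{\mathrm{net}}))$. Operator-norm continuity ensures that $\abs{X_{C_p}-X_{\tilde C_p}}=O(\epsilon_{\mathrm{net}})$ whenever $\|C_p-\tilde C_p\|_{\mathrm{op}}\leq\epsilon_{\mathrm{net}}$, so taking $\epsilon_{\mathrm{net}}=\delta/4$ and union-bounding over the net gives
\[
\Pr_{C_r}\!\left[\max_{C_p} X_{C_p}\geq\delta\right] \leq N_{\mathrm{net}}\cdot(1+\epsilon)\!\left(\frac{2t}{d\delta}\right)^{\!t}.
\]
Finally, I would pick $\delta=c\cdot t/d$ for a suitable large constant $c$ and convert the tail bound to an expectation bound via $\E[X]\leq\delta+\Pr[X\geq\delta]$ (using $X\in[0,1]$), targeting $\overline{\delta_{0^n}}=O(t/2^n)$, which is exponentially small in $n$.

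The hard part will be closing the parameter budget. For the union bound to be nontrivial at $\delta=\Theta(t/d)$, the design order $t$ must satisfy $t\log(c/2)\gtrsim \log N_{\mathrm{net}}=\Theta(n^2\tau_p)$. But Lemma~\ref{lem:rc} only supplies $t\lesssim(\tau_r/n)^{0.19}$, and even the conjectural $\tau\sim nt$ depth bound mentioned after Theorem~\ref{theo:lb} reaches only $t\lesssim\tau_r/n$; both fall short of $n^2\tau_p = n^2 k\tau_r$ by polynomial factors when $\tau_r=\poly(n)$ and $\tau_p=k\tau_r$. Closing this gap, and thereby proving the conjecture uniformly in $n$, therefore appears to require either substantially stronger $t$-design depth bounds (perhaps from the sub-linear-depth brickwork-design constructions appearing after~\cite{Haferkamp2022randomquantum}), or a fundamentally different argument---one that exploits the light-cone locality of $C_p$ to reduce its effective dimensionality below the naive $O(n\tau_p)$, or that bounds $\E_{C_r}[\max_{C_p} X_{C_p}]$ directly via a covariance-style analysis rather than a discretized union bound.
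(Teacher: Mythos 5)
This statement is not proved in the paper at all: it is presented explicitly as a \emph{conjecture}, supported only by numerical evidence (the observed $\delta \sim c \cdot a^{-n}$ fits in Section~\ref{PEAKABIL} and Figure~\ref{fig:res}). The authors state plainly in the ``Future directions'' section that they do not have a rigorous understanding of how much peakedness can be attained, and they leave the explanation of the observed structure as a central open problem. So there is no paper proof for you to match or diverge from.

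That said, your proof \emph{attempt} is reasonable and, to your credit, you correctly diagnose why it fails. The moment bound $\E_{C_r}[X_{C_p}^t]\lesssim (1+\epsilon)(t/d)^t$ from an approximate $2t$-design is standard, the $\epsilon$-net over the $O(n\tau_p)$-parameter family of peaking circuits with $\log N_{\rm net}=\Theta(n^2\tau_p)$ (once $\epsilon_{\rm net}$ is taken exponentially small) is the right continuity estimate, and the union-bound arithmetic is correct: one would need $t \gtrsim n^2\tau_p$, whereas Lemma~\ref{lem:rc} supplies only $t\lesssim(\tau_r/n)^{0.19}$, and even the best conjectured scaling $t\lesssim\tau_r/n$ is still polynomially short. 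Notice that this shortfall is exactly the same obstruction the paper runs into when it proves the much weaker Theorem~\ref{theo:lb}: the available design order $t$ grows only sublinearly in $\tau_r/n$, so design-based arguments can currently only rule out \emph{constant} peakedness, not exponentially-small-but-larger-than-$1/d$ peakedness. Indeed the paper's own numerics point in the opposite direction from a clean ``$X_{C_p}$ concentrates at $1/d$'' story: they find $\delta\approx 1.189^{-n}$ when $\tau_r=n$, $\tau_p=n/2$, which is \emph{far} larger than $2^{-n}$, so any correct proof of the conjecture must allow for a decay base strictly between $1$ and $2$ and cannot hope to pin the max down near $t/d$. Your closing suggestions (light-cone reduction of the effective net dimension, or a direct covariance-style bound on the max) are sensible directions, but as things stand the conjecture remains open and your argument, as you acknowledge, does not close it.
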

While the exponential decay might be a little disappointing, we've found empirically that $\tau_p \ll \tau_r$ peaking layers---not nearly enough to reverse the $\tau_r$ random layers---are nevertheless enough to improve the average peakedness $\delta$ almost to the cube root of what it would've been had we merely reversed the last $\tau_p$ random layers.

\begin{conjecture}[Peaking layers improve average peak weight] At $\tau_r = \operatorname*{poly}(n)$ and $ \tau_p = k\tau_r$, where $0<k<1$, there exists a constant $0< \alpha < 1$ such that $${\text{average peak weight}(\tau_r, \tau_p,n)} = {\text{average peak weight}(\tau_r-\tau_p, 0, n)}^{\alpha}.$$
\end{conjecture}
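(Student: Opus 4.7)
The plan is to attack Conjecture~2.5 by combining a constructive ansatz for the peaking layers with a matching upper bound extracted from the unitary-design machinery of Section~\ref{DESIGN}. First I would split $U_r = U_{\mathrm{suf}} U_{\mathrm{pre}}$, with $U_{\mathrm{pre}}$ of depth $\tau_r - \tau_p$ and $U_{\mathrm{suf}}$ of depth $\tau_p$. The trivial choice $V = U_{\mathrm{suf}}^{\dagger}$ already attains the right-hand side $\overline{\delta_{0^n}}(\tau_r-\tau_p,0,n)$, so the content of the conjecture is that the optimal $V$ beats this baseline by a \emph{polynomial factor in the exponent}, governed by some $0 < \alpha < 1$.

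For the lower bound, I would design an ansatz that exploits the causal lightcone of the last $\tau_p$ layers: partition the qubits into overlapping blocks of width $O(\tau_p)$ and, within each block, use the peaking layers to perform a local amplitude-amplification-like step on the intermediate state $|\phi\rangle := U_{\mathrm{pre}}|0^n\rangle$, rather than simply undoing the suffix. Treating $U_{\mathrm{pre}}$ and $U_{\mathrm{suf}}$ as independent random circuits, the Haar expectation of the resulting peak weight can be computed by Weingarten calculus applied blockwise, and the exponent $\alpha$ should emerge from the ratio between the ``effective search dimension'' the ansatz explores per block and the full Hilbert-space dimension $d = 2^n$.

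For the matching upper bound, one route is to invoke Theorem~\ref{theo:sc} at a fractional choice of $t$: if the optimal peakedness exceeded $\overline{\delta_{0^n}}(\tau_r-\tau_p,0,n)^{\alpha'}$ for too small an $\alpha'$, then a depth-$\tau_p$ circuit applied to $|0^n\rangle$ would approximate a typical depth-$\tau_r$ random output to within the constant precision ruled out by Lemma~\ref{lem:ssc_to_sc}, contradicting the $t$-design bound once Lemma~\ref{lem:rc} is invoked. A complementary, more quantitative route is a direct second-moment estimate of $\mathbb{E}_{U_r}\bigl[\max_V |\langle 0^n| V U_r |0^n\rangle|^2\bigr]$ via an $\epsilon$-net over depth-$\tau_p$ brickwork circuits combined with concentration inequalities on the resulting polynomial in the Haar entries.

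The main obstacle will be controlling the statistical correlations among the candidate $V$'s: the $\epsilon$-net over depth-$\tau_p$ circuits has cardinality $\exp(\operatorname{poly}(n,\tau_p))$, but these are \emph{highly correlated} random variables under the $U_r$-average, so a naive union or chaining bound will give only $\max \sim \operatorname{poly}(n,\tau_p)/d$, whereas the numerics show $\max \sim d^{-\alpha}$ with $\alpha$ bounded strictly away from $1$. A related difficulty is that the empirically observed exponent $\alpha \approx 1/3$ lacks any obvious geometric meaning in either the ansatz or the upper-bound side; pinning it down (as opposed to merely establishing \emph{some} $\alpha < 1$) will likely require a refined moment calculation in the spirit of OTOC and frame-potential estimates, tracking precisely how much amplitude mass of a typical random output is ``recoverable'' by a sub-reversal circuit of depth $\tau_p$.
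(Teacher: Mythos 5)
This statement is labeled a \emph{Conjecture} in the paper, and the paper contains no proof of it; it is put forward purely on the strength of numerical fits (the observation that, e.g., $\tau_r=n$, $\tau_p=n/2$ gives peakedness decaying like $1.189^{-n}$, roughly the cube root of the trivial-reversal baseline). The authors explicitly state in the Future Directions section that they ``do not understand'' the mechanism behind this nontrivial peakedness and leave it as a mystery. So there is no proof to compare against: your task should have been to recognize that a blind proof cannot exist in the paper, or to supply one yourself.

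As a research program, your sketch has the right overall shape---a constructive ansatz for the lower bound, plus an information-theoretic upper bound from the design machinery---and you are commendably honest about the obstacles. But there are concrete gaps. On the upper-bound side, Theorem~\ref{theo:sc} (via Lemma~\ref{lem:ssc_to_sc}) only rules out achieving \emph{constant} overlap with circuits of size $o(t)$; it says nothing quantitative once the target overlap is already exponentially small in $n$, which is exactly the regime the conjecture lives in. The design bound is a coarse threshold, not a fine-grained scaling law, so ``invoking Theorem~\ref{theo:sc} at a fractional $t$'' will not produce a bound of the form $\delta \le (\text{baseline})^{\alpha'}$ without a fundamentally sharper estimate. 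On the lower-bound side, ``local amplitude amplification on $U_{\mathrm{pre}}|0^n\rangle$'' within lightcone blocks is not well-defined as stated: amplitude amplification requires an oracle or reflection about the target, neither of which is available to a depth-$\tau_p$ ansatz that only sees the circuit description, and a blockwise Weingarten computation of the resulting peak would still have to account for how the blocks overlap and interfere. Finally, the conjecture asserts an (asymptotic) \emph{equality} with a single constant $\alpha$; even matching upper and lower bounds would only give you inequalities with two different exponents unless they happen to coincide, and nothing in the sketch forces that coincidence. You correctly flag that the observed $\alpha \approx 1/3$ has no explanation; the paper leaves exactly this as its central open problem, and your proposal, while a sensible place to start, does not close that gap.
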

If the pattern we observed numerically were to persist, it would imply that an average peak weight of $\sim5\times10^{-4}$ could be obtained on a noiseless 50-qubit machine, at $\tau_r = n = 50$ and $\tau_p = n/2 = 25$.  Thus, \emph{if} we also had a fast way to generate these circuits, and \emph{if} their structure was hard to detect classically, then this could provide a solution to the problem of verifiable quantum advantage on NISQ devices.

{}

\section{Future directions}

A central open problem, of course, is to obtain a rigorous understanding of how much peakedness can be attained in mostly random quantum circuits.  In this paper, we showed only that obtaining $1/\poly(n)$ peakedness with high probability requires $\Omega((\tau_r/n)^{0.19})$ peaking layers, while $\tau_p = \tau_r$ layers suffice.  Our numerical results strongly indicate that nontrivial peakedness is possible with fewer peaking layers, but we do not yet understand this.  Indeed, we do not know what sort of structure in the states output by random quantum circuits could make such results possible: we leave this as a mystery.

A second open problem is how to generate these sorts of circuits efficiently. Our numerical study was limited by the infamous barren plateau issue: that is, exponentially vanishing gradient values with $n$~\cite{mcclean2018barren}. 
To increase numerical stability, we ran batches of independent optimizations with random initializations for each RQC instance and took the best result from the batch. We also attempted sequential optimization, as used in ~\cite{haghshenas2021variational,zhang2022qubit,zhang2024sequential}, but it showed no significant improvement. As we are optimizing a global function, the barren plateau issue shows up even at a constant depth and is particularly hard to mitigate~\cite{cerezo2021cost}. If we understood what sort of structure our circuits were exploiting, perhaps a completely different approach to generating such circuits would suggest itself, not reliant on gradient descent or other optimization heuristics.

A final open problem is what \emph{other} ensembles of peaked quantum circuits might be possible.  For example, what are the possibilities for obfuscating the identity transformation?  A few obfuscation protocols exist but are generally hard to implement on a NISQ device~\cite{alagic2016quantum,bartusek2023obfuscation}. Here, if we set $\tau_p \gg \tau_r$, we are guaranteed to have many different implementations of the identity; are any of them indistinguishable from random? 

\section{Acknowledgments}

We thank David Gosset, Nick Hunter-Jones, Justin Yirka, and Ruizhe Zhang for their insightful discussions. S.A. was supported by a Vannevar Bush Fellowship from
the US Department of Defense, the Berkeley NSF-QLCI CIQC Center, and a Simons Investigator Award. Y.Z. was supported by a CQIQC fellowship at the University of Toronto. This research was supported in part by grant NSF PHY-2309135 to the Kavli Institute for Theoretical Physics (KITP) and based upon Award 510817 supported by the U.S. Department of Energy, Office of Science, National Quantum Information Science Research Centers, Quantum Systems Accelerator.

The simulation code in this work can be found at \url{https://github.com/yuxuanzhang1995/Peaked-circuits}.

\newpage

\newpage
\bibliographystyle{alpha}
\bibliography{ref}
\newpage
\appendix
\section{Improved average peak weight of a two-layer RQC from a single layer quantum circuit}\label{app:single}

In this appendix, we consider $\tau_r = 2$ and $\tau_p = 1$. We show that, given the two random layers, there exists a peaking circuit that on average polynomially increases the peak weight compared with a $\tau_r-\tau_p = 1$-layer RQC.

Assume without loss of generality that $n$ is even. Call the first and second random layers $R_1$ and $R_2$ respectively, and call the peaking layer $P$.  Observe that the output state, after we apply $R_1$ to the initial state $\ket{0^n}$, can be written as $$R_1\ket{0^n} = \prod_i^{n/2} (\alpha_i \ket{\psi_i^0}\ket{\psi_{i+1}^0}+\beta_i \ket{\psi_i^1}\ket{\psi_{i+1}^1}) = \prod_i^{n/2} U_{2i} U_{2i+1}(\alpha_i \ket{00}+\beta_i \ket{11})$$
Here we have written the state as a product of 2-local states in their Schmidt decomposed form (assuming $|\alpha_i|>|\beta_i|$), where $\{\ket{\psi_{i}^{0}},\ket{\psi_{i}^{1}}\}$ is some local orthonormal basis on the $i$-th qubit. We claim:
\begin{theorem}
    For any choice of $R_2R_1$, there exists a single-layer circuit $P$, such that the output state $PR_2R_1\ket{0^n}$ has polynomially higher average peak weight than $R_1\ket{0^n}$.
\end{theorem}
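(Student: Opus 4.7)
The plan is to explicitly construct a peaking layer $P = P(R_1, R_2)$ and compute the expected peak weight of $PR_2R_1\ket{0^n}$, aiming for the bound stated in the abstract: that the baseline $(25/48)^{n/2}$ for $R_1\ket{0^n}$ is lifted to $(7/8)^{n/2}$ after adding a single optimized peaking layer.

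First, using the Schmidt form $R_1\ket{0^n} = \bigotimes_{i=1}^{n/2} U_{2i-1}U_{2i}(\alpha_i\ket{00}+\beta_i\ket{11})$ given in the statement, I will use two standard facts about Haar-random 2-qubit states: the larger squared Schmidt coefficient $|\alpha_i|^2$ has density $f(\mu) = 6(2\mu-1)^2$ on $[1/2,1]$ with mean $7/8$, while the maximum computational-basis weight has mean $H_4/4 = 25/48$ (the expected maximum of four Dirichlet-$(1,1,1,1)$ coordinates). Taking $n/2$ independent pairs immediately yields the two product formulas. By Haar invariance of $R_2$'s distribution, I absorb the local-basis rotations $U_{2i-1}, U_{2i}$ into $R_2$; WLOG the state becomes $R_1\ket{0^n} = \ket{\Psi} := \bigotimes_i (\alpha_i\ket{00}+\beta_i\ket{11})$ while $R_2$ remains Haar random on the offset pairs.

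Next, writing $P = \bigotimes_i P_i$ on the aligned pairs $(2i-1,2i)$, note that $P^\dagger \ket{0^n} = \bigotimes_i \ket{\xi_i}$ where each $\xi_i$ ranges freely over unit vectors in $\mathbb{C}^4$. Hence $\max_P |\bra{0^n}PR_2\ket{\Psi}|^2$ equals the maximum ``pair-separable'' overlap $\max_{\{\xi_i\}} |\bra{\bigotimes_i \xi_i} R_2 \ket{\Psi}|^2$. I propose choosing each $\ket{\xi_i}$ adaptively as the dominant eigenvector of the reduced density matrix of $R_2\ket{\Psi}$ on pair $(2i-1,2i)$, which maximizes local peakedness pair by pair and reduces the analysis to estimating $\E\!\left[\prod_i \lambda_{\max}(\rho_i)\right]$.

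The main obstacle is showing that this adaptive choice attains expected overlap $(7/8)^{n/2}$, since $R_2$ entangles across pair boundaries and the global product-state max need not factorize as $\prod_i \lambda_{\max}(\rho_i)$. My plan is to exploit two structural features: (i) the independent Haar gates of $R_2$ act on disjoint pairs $(2i,2i+1)$, so Haar moment calculations (via the Weingarten formula for $U(4)$) per gate combine into a transfer-matrix contraction along the pair chain; and (ii) the depth-2 state $R_2\ket{\Psi}$ has bounded entanglement ($\leq 1$ unit) across each inter-pair cut, so each pair-reduced density matrix retains substantial weight on its principal eigenvector. Assembling these pieces, the per-pair factor $7/8$ should propagate through the pair chain to yield $(7/8)^{n/2}$, up to polynomial corrections consistent with the theorem's ``polynomially higher'' statement.
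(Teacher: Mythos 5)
Your proposal correctly assembles the two key Haar-measure facts (expected top Schmidt weight $7/8$ via the density $6(2\mu-1)^2$ on $[1/2,1]$, and expected maximum basis weight $H_4/4 = 25/48$), and the absorption of the local Schmidt rotations into $R_2$ by unitary invariance is a legitimate simplification. However, you then diverge from the paper in a way that introduces a genuine gap: you restrict $P$ to act as a tensor product over the \emph{aligned} pairs $(2i-1,2i)$, which forces you to estimate the best aligned-pair product-state overlap with the entangled state $R_2\ket{\Psi}$. That is a hard multipartite-geometric-entanglement problem, and your plan for it (adaptively choosing $\xi_i$ as the dominant eigenvector of $\rho_i$, then hoping transfer-matrix/Weingarten arguments propagate a per-pair factor of $7/8$) is neither carried out nor obviously correct: the best product state does not in general factorize as $\prod_i \lambda_{\max}(\rho_i)$, and nothing in the sketch establishes that the answer lands at $(7/8)^{n/2}$ rather than something smaller.

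The paper sidesteps all of this with a much simpler choice: take $P$ to act on the \emph{offset} pairs and set $P = \bigl(\prod_i U_{2i}^\dagger U_{2i+1}^\dagger\bigr) R_2^{-1}$, i.e.\ literally invert $R_2$ and the Schmidt-basis rotations (the single-qubit gates merge into the two-qubit offset layer, so this is still one layer). Then $P R_2 R_1 \ket{0^n} = \bigotimes_i (\alpha_i\ket{00}+\beta_i\ket{11})$ exactly, the peak weight is $\prod_i |\alpha_i|^2$, and by independence the expectation is $(7/8)^{n/2}$, to be compared against the $(25/48)^{n/2}$ baseline. Nothing in the theorem statement forces $P$ onto the aligned pairs; dropping that self-imposed constraint collapses your hard open estimate into a one-line computation. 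If you do want to keep your aligned-pair ansatz, you would need to actually bound $\E\bigl[\max_{\{\xi_i\}} |\bra{\bigotimes_i \xi_i} R_2\ket{\Psi}|^2\bigr]$ from below, which your sketch does not do.
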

\begin{proof}

We choose $$P := \prod_i^{n/2}U_{2i}^\dag U_{2i+1}^\dag R_2^{-1}.$$

First, we calculate the average peak weight of $R_1\ket{0^n}$ over the Haar ensemble. We notice that the resulting state is simply a tensor product of 2-qubit Haar random states, whose distribution after measurement follows the famous Porter-Thomas distribution. The quantity that interests us is 
$$\int dU\ {\rm Max}|U_{0,i}|^2, $$
which can be calculated using random matrix theory~\cite{lakshminarayan2008extreme} to be $\frac{1}{d} \sum_i^d \frac{1}{d}$. For $d = 2^2 = 4$, this means the average peak weight for a 2-qubit Haar random state is $\frac{25}{48}$. Putting everything together, the average peak weight generated by single random layer is $R_1\ket{0^n}$ is $(\frac{25}{48})^{n/2}$.

On the other hand, the resulting state after the peaking circuit is $$PR_2R_1\ket{0^n} = \prod_i^{n/2} (\alpha_i \ket{00}+\beta_i \ket{11}).$$ To calculate the average peak weight, we just need to know the expectation of the highest Schmidt weight $\alpha_i$. For Haar random quantum states, the entanglement spectrum is known to follow the Marchenko-Pastur distribution~\cite{Marcenko_1967}; meanwhile, the distribution of the largest eigenvalue has been calculated~\cite{tracy1994level,vznidarivc2006entanglement}. In particular, for two-qubit random states, $\mathbb{E}[|\alpha_i|^2] = 7/8$. This means that the output state of $PR_2R_1\ket{0^n}$ has an average peak weight of $(\frac{7}{8})^{n/2}$.
\end{proof}
\end{document}